\newcommand{\floor}[1]{\left\lfloor #1\right\rfloor}
\newtheorem{theorem}{Theorem}
\newtheorem{lemma}[theorem]{Lemma}
\newtheorem{corollary}[theorem]{Corollary}
\newtheorem{remark}{Remark}
\newcommand{\ZZ}{\mathbb{Z}}
\newcommand{\RR}{\mathbb{R}}
\DeclareMathOperator{\conv}{conv}
\begin{document}

\title{Efficient Algorithms to Test Digital Convexity}
\author{
    Lo\"ic Crombez\\
    Universit\'e Clermont Auvergne and\\
    LIMOS\\ 
    Clermont-Ferrand, France\\
    lcrombez@isima.fr
    \and
    Guilherme D. da Fonseca\\
    Universit\'e Clermont Auvergne and\\
    LIMOS\\ 
    Clermont-Ferrand, France\\
    fonseca@isima.fr
    \and
    Yan G\'erard\\
    Universit\'e Clermont Auvergne and\\
    LIMOS\\ 
    Clermont-Ferrand, France\\
    yan.gerard@uca.fr
}

\date{}
\maketitle

\begin{abstract}
A set $S \subset \ZZ^d$ is \emph{digital convex} if $\conv(S) \cap \ZZ^d = S$, where $\conv(S)$ denotes the convex hull of $S$.
In this paper, we consider the algorithmic problem of testing whether a given set $S$ of $n$ lattice points is digital convex.
Although convex hull computation requires $\Omega(n \log n)$ time even for dimension $d = 2$, we provide an algorithm for testing the digital convexity of $S\subset \ZZ ^2$ in $O(n + h \log r)$ time, where $h$ is the number of edges of the convex hull and $r$ is the diameter of $S$.
This main result is obtained by proving that if $S$ is digital convex, then the well-known quickhull algorithm computes the convex hull of $S$ in linear time.
In fixed dimension $d$, we present the first polynomial algorithm to test digital convexity, as well as a simpler and more practical algorithm whose running time may not be polynomial in $n$ for certain inputs.
\end{abstract}

\section{Introduction}

Digital geometry is the field of mathematics that studies the geometry of points with integer coordinates, also known as \emph{lattice points}~\cite{KlR04}.
Convexity is a fundamental concept in digital geometry, as well as in continuous geometry~\cite{Ro89}.
From a historical perspective, the study of digital convexity dates back to the works of Minkowski~\cite{Min10} and it is the main subject of the mathematical field of geometry of numbers.

While convexity has a unique well stated definition in any linear space, different definitions have been investigated in $\ZZ ^2$ and $\ZZ ^3$~\cite{KR82,KR82-2,Cha83,Ki96,ChR1998}. In two dimensions, we encounter at least five different approaches, called respectively digital line, triangle, line~\cite{KR82}, HV (for Horizontal and Vertical~\cite{BDNP96}), and Q (for Quadrant~\cite{Da01}) convexities. These definitions were created in order to guarantee that a digital convex set is connected (in terms of the induced grid subgraph), which simplifies several algorithmic problems.

The original definition of digital convexity in the geometry of number does not guarantee connectivity of the grid subgraph, but provides several other important mathematical properties, such as being preserved under certain affine transformations (Fig.~\ref{f:shearing}). The definition is the following. A set of lattice points $S \subset \ZZ^d$ is \emph{digital convex} if $\conv(S) \cap \ZZ^d = S$, where $\conv(S)$ denotes the convex hull of $S$.

\begin{figure}[tb]
\begin{center}
        \includegraphics[width=150px]{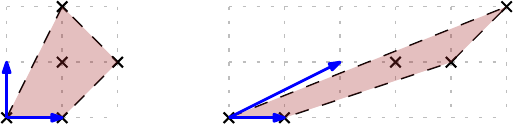}
    \caption{\textbf{Shearing a digital convex set.} Example of a set whose connectivity is lost after a linear shear.} \label{f:shearing}
    \end{center}
\end{figure}


Herein, we consider the fundamental problem of verifying whether a given set of lattice points is digital convex.

\medskip
Problem \texttt{TestConvexity$(S)$}\\
  \textbf{Input: }Set $S \subset \ZZ^d$ of $n$ lattice points given by their coordinates.\\ 
  \textbf{Output: }Determine whether $S$ is digital convex or not.
\medskip
  
The input of \texttt{TestConvexity$(S)$} is an unstructured finite lattice set (without repeating elements). Related work considered more structured data in dimension $2$, in which $S$ is assumed to be connected. The \emph{contour} of a connected set $S$ of lattice points is the ordered list of the points of $S$ having a grid neighbor outside $S$. When $S$ is connected, it is possible to represent $S$ by its contour, either directly as in~\cite{DRR2003} or encoded as binary word~\cite{BLP2009}. The algorithms presented in~\cite{DRR2003,BLP2009} test digital convexity in linear time on the respective input representations.

Our work, however, does not make any assumption on $S$ being connected, or any particular ordering of the input. In this setting, a naive approach to test the digital convexity is: 

\begin{enumerate}
    \item Compute the convex hull $\conv (S)$ of the $n$ lattice points of $S$.
    \item Compute the number $n'$ of lattice points inside the convex hull  of $S$.
    \item If $n=n'$, then $S$ is convex. Otherwise, it is not.
\end{enumerate}


Step 1 consists of computing the convex hull of $n$ points. The field of computational geometry provides a  plethora of algorithms to compute the convex hull of a  finite set $S\subset \RR ^d$ of $n$ points~\cite{BCK08}. The fastest algorithms for dimensions $2$ and $3$ take $O(n \log n)$ time~\cite{Yao81}, which matches the lower bound in the algebraic decision tree model of computation~\cite{PrH77}. 
In dimension $d \leq 3$, if we also take into consideration the output size $h$, i.e. the number of vertices of the convex hull, the fastest algorithms take $O(n \log h)$ time~\cite{KiS86,Cha96}. Some polytopes with $n$ vertices (e.g., the cyclic polytope) have $\Theta(n^{\floor{(d-1)/2}})$ facets. Therefore, any algorithm that outputs this facet description of the convex hull requires $\Omega(n^{\floor{(d-1)/2}})$ time. Optimal algorithms to compute the convex hull in dimension $d \geq 4$ match this lower bound~\cite{ChB93}.

Step 2 consists of computing the number of lattice points inside a polytope (represented by its vertices), which is a well studied problem. In dimension $2$, it can be solved using Pick's formula~\cite{Pic1899}. The question has been widely investigated in the framework of the geometry of numbers, from Ehrhart theory~\cite{Ehr62} to Barvinok's algorithm~\cite{Bar94}. Currently best known algorithms have a complexity of $O(n^{O(d)})$ for fixed dimension $d$~\cite{Bar94-2}. 
As conclusion, the time complexity of this naive approach is at least the one of the computation of the convex hull.

\subsection{Results}

In Section~\ref{s:2d}, we consider the 2-dimensional version of the problem and show that the convex hull of digital convex sets can be computed in linear time.
Our main result is an algorithm for dimension $d=2$ to solve \texttt{TestConvexity$(S)$} in $O(n + h \log r)$ time, where $h$ is the number of edges of the convex hull and $r$ is the diameter of $S$.

In Section~\ref{s:fixedd}, we consider the problem in fixed dimension $d$. We present the first polynomial-time algorithm to test digital convexity, as well as a simpler and more practical algorithm whose running time may not be polynomial in $n$ for certain inputs.

\section{Digital Convexity in 2 Dimensions}
\label{s:2d}

The purpose of this section is to provide an algorithm to test the convexity of a finite lattice $S \subset \ZZ ^2$ in linear time in $n$. To this endeavour, we show that the convex hull of a digital convex set $S$ can be computed in linear time. In fact, we show that this linear running time is achieved by the well-known quickhull algorithm~\cite{BBD96}.

Quickhull is one the many early algorithms to compute the convex hull in dimension $2$. Its worst case time is $O(n^2)$, which makes it generally less attractive than the $O(n \log n)$ algorithm.
However for certain inputs and variations of the algorithm, the average time complexity is reduced to  $O(n \log n)$ or $O(n)$~\cite{BCK08,Gre90}.

The quickhull algorithm starts by initializing a convex polygon in the following manner. First it computes the top-most and bottom-most points of the set. Then it computes the two extreme points in the normal direction of the line supported by the top-most and bottom-most points.
Those four points describe a convex polygon that we call a \emph{partial hull}, which is contained inside the convex hull of $S$. The points contained in the interior of the partial hull are discarded.
Furthermore, horizontal lines and lines parallel to the top-most to bottom-most line passing through these points describe an outlying bounding box in which the convex hull lies (Fig. \ref{qhull}). 

\begin{figure}[tb]
\begin{center}
        \includegraphics[width=150px]{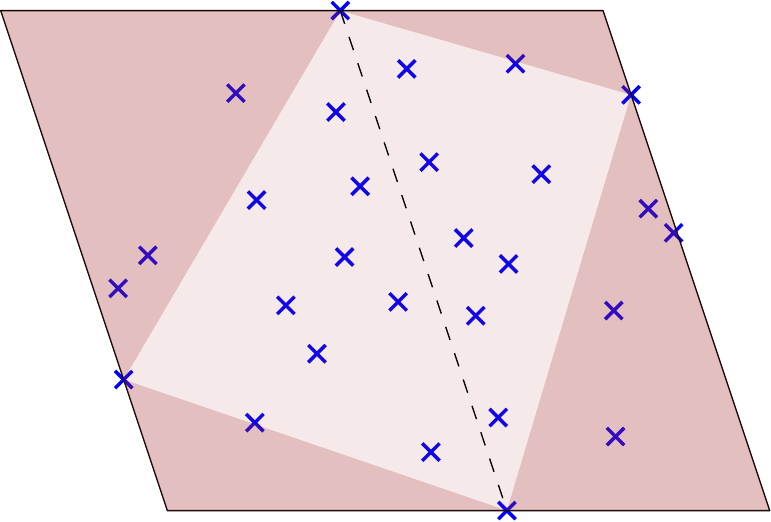}
    \caption{\textbf{Quickhull initialization.} Points inside the partial hull (light brown) are discarded. The remaining points are potentially part of the hull.} \label{qhull}
    \end{center}
\end{figure}

The algorithm adds vertices to the partial hull until it obtains the actual convex hull. This is done by inserting new vertices in the partial hull one by one.
Given an edge of the partial hull, let $v$ denote its outwards normal vector. The algorithm searches for the extreme point in direction $v$. If this point is already an edge point, then the edge is part of the convex hull. Otherwise, we insert the farthest point found between the two edge vertices, discarding the points that are inside the new partial hull. 
Throughout this paper, we call a \emph{step} of the quickhull algorithm the computation of the farthest point of every edge for a given partial hull.
When adding new vertices to the partial hull, the region inside the partial hull expands. Points inside that expansion are discarded by quickhull and herein we name this region \emph{discarded region}. The points that still lie outside the partial hull are preserved, and we call the region within which points might still lies \emph{preserved region} (Fig. \ref{qhull_split}).







We show that quickhull steps takes linear time and that at each step half of the remaining input points of the convex hull is discarded. Therefore, as in standard decimation algorithms, the total running time remains linear. In Section~\ref{ss:22}, we explain how to use this algorithm to test the digital convexity of any lattice set in linear time in $n$.


\begin{theorem}\label{t2}
If the input is a digital convex set of $n$ points, then QuickHull has $O(n)$ time and space complexities.
\end{theorem}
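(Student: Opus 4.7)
The plan is a standard decimation analysis. I would show that (i) each quickhull step runs in $O(m)$ time, where $m$ is the number of remaining input points at the start of the step, and (ii) the number of remaining input points at least halves after every step. Together these yield $\sum_k m_k \leq n + n/2 + n/4 + \cdots = O(n)$ for both time and space.

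For (i), a straightforward implementation suffices: in one pass through the remaining points, each point is assigned to the unique edge of the (convex) partial hull whose outward half-plane contains it, and in the same pass it is compared against the current farthest-point candidate for that edge in the outward normal direction. This is $O(m)$ per step, and it requires only $O(1)$ auxiliary storage per edge and per remaining point, which gives the $O(n)$ space bound.

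For (ii), the crucial exploit of digital convexity, I would fix an edge $ab$ of the current partial hull with farthest point $p$, and compare the newly absorbed lattice points (those in the triangle $T = abp$ not already on $ab$) with the points of $S$ still in the new preserved regions near the edges $ap$ and $bp$. Since $a, b, p \in S$ and $S$ is digital convex, every lattice point of $T \subseteq \conv(S)$ belongs to $S$, and these are precisely the points freshly removed from the remaining set. The key geometric observation is that, because $p$ is the farthest point from line $ab$, the entire old preserved region lies inside a strip of width $\mathrm{dist}(p,\mathrm{line}(ab))$ parallel to $ab$; combined with the bounding box set at initialization, this confines the new preserved regions to explicit triangular shapes whose combined area exactly matches that of $T$. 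A lattice-count comparison using Pick's formula, with careful bookkeeping of the gcds of the edge vectors, should then yield the per-edge inequality that the lattice points of $S$ newly absorbed by $T$ are at least as many as those still remaining in the new preserved regions. Summed over all current edges of the partial hull, this yields the halving of remaining points.

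The hard part will be establishing this per-edge lattice inequality: the raw areas of $T$ and of the combined new preserved regions already match, so the argument cannot rest on area alone and must extract a strict advantage from the lattice structure, namely the boundary contributions in Pick's theorem. A case analysis based on how $p$ is positioned relative to $a$, $b$, and the relevant bounding box corner is likely required to control the gcd terms uniformly. Once this inequality is in hand, combining it with (i) via the geometric series above gives the claimed $O(n)$ time and space bounds.
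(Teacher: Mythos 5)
Your overall decimation framework (linear-time steps plus at-least-halving of the remaining points, summed as a geometric series) and your use of digital convexity (every lattice point of the absorbed triangle $T=\triangle abp$ lies in $\conv(S)$ and hence in $S$) both match the paper. However, the core of part (ii) — the per-edge lattice inequality — rests on a step that would fail. First, the areas do not "exactly match": the two new preserved triangles together have area $h\,|ef|/2$ while $T$ has area $h\,|ab|/2$ with $|ef|<|ab|$, so the discarded area is strictly larger; but this is beside the point, because (as the paper explicitly remarks) there is no direct relation between the area of a triangle and the number of lattice points it contains, and Pick's formula is not applicable to the preserved triangles since their vertices $e$ and $f$ (and the apex $c$ of the old preserved region) are generally not lattice points. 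No amount of gcd bookkeeping rescues Pick's theorem for triangles with non-lattice vertices, so your proposed route to the inequality has a genuine gap.

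The missing idea is an explicit lattice-preserving injection rather than a counting formula: reflect $\triangle ape$ through the midpoint of $ap$ and $\triangle bpf$ through the midpoint of $bp$. Since $a$, $b$, $p$ are lattice points, these midpoints have half-integer coordinates, so the reflections map lattice points to lattice points; the two image triangles lie inside $T$ and are disjoint, giving one discarded point of $S$ (by digital convexity) for each preserved point. One must then separately handle the lattice points on the shared boundary segments $ap$, $bp$ (which need not be preserved, as they lie on edges of the partial hull) and, in the subsequent step, the points on the segment $ef$ (which are all extreme and can be discarded after adding the first and last of them to the hull). Your proposal does not address these boundary cases, and they are exactly where a naive "half the points are discarded" claim would otherwise break.
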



\subsection{Proof of Theorem \ref{t2}}


We prove Theorem \ref{t2} with the help of the following lemma.

\begin{lemma}
The area of the discarded region is larger than the area of the preserved region.
\end{lemma}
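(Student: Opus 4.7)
The plan is to argue edgewise: for each edge $e = p_1 p_2$ of the current partial hull, the triangle newly added to the partial hull during the step accounts for strictly more area than the region of candidate points that remains associated with the two resulting edges.

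First I would establish, by induction on the number of quickhull steps, that the preserved region attached to each edge $e$ of the current partial hull is a triangle with base $e$ and apex $q$ lying strictly on the outward side of $e$. For the initial quadrilateral this is immediate from the construction, since each such region is cut off between an edge of the quadrilateral and two adjacent sides of the outlying bounding box (Fig.~\ref{qhull}). For the inductive step, when $e$ is split by a new vertex $p^*$, the two new preserved regions are bounded by the new edges $p_1 p^*$ and $p^* p_2$, the line $L^*$ through $p^*$ parallel to $e$ (which is a valid outer bound because $p^*$ is extreme in the outward normal direction $v$), and the respective sides $p_1 q$ and $p_2 q$ of the previous preserved triangle; these three lines form a triangle in each case.

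Next, fixing an edge $e$ on which a genuine insertion takes place, I would set up local coordinates with $p_1 = (0,0)$, $p_2 = (L, 0)$, outward normal $+y$, apex $q = (q_x, q_y)$ with $q_y > 0$, and $p^* = (p_x^*, h)$ with $0 < h \le q_y$. The newly discarded region attached to $e$ is the triangle with vertices $p_1, p_2, p^*$, of area $\tfrac{1}{2} L h$. The two new preserved triangles meet along the segment of the line $\{y = h\}$ cutting the old preserved triangle, whose length is $L(1 - h/q_y)$ by similar triangles; a direct computation then gives that their combined area equals $\tfrac{1}{2} L h (1 - h/q_y)$, which is strictly smaller than $\tfrac{1}{2} L h$. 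Edges already on the convex hull contribute nothing to either side, so summing over all edges of the current partial hull yields the claimed strict inequality.

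The main obstacle I anticipate is the bookkeeping in the inductive step, in particular verifying that the apex of each new preserved triangle genuinely lies on the outward side of the new edge, so that the triangle is non-degenerate and the region is indeed where the surviving candidate points must live. Once this structural claim is in hand, the area comparison reduces to the elementary similar-triangles calculation sketched above.
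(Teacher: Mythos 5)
Your proof is correct, and it reaches the same structural decomposition as the paper: the region of surviving candidates attached to an edge $ab$ is a triangle $\triangle abc$, the newly discarded region is $\triangle abd$, and the two new preserved regions are the triangles $\triangle ade$ and $\triangle bdf$ cut off by the line through $d$ parallel to $ab$. Where you diverge is in the final area comparison. The paper reflects $\triangle ade$ and $\triangle bdf$ through the midpoints $c_1$ of $ad$ and $c_2$ of $bd$, observes that the images land inside $\triangle abd$ without overlapping, and concludes the area inequality; you instead compute both areas directly in local coordinates, getting preserved area $\tfrac{1}{2}Lh(1-h/q_y)$ versus discarded area $\tfrac{1}{2}Lh$. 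Your computation is more elementary and even sharper (it gives the exact ratio $1-h/q_y$ rather than a bare inequality), but the paper's reflection is not an arbitrary choice: because $a$, $b$, $d$ are lattice points, the central symmetries through $c_1$ and $c_2$ map lattice points to lattice points, and this is exactly what the paper reuses in its Corollary to count \emph{points} rather than area. As the paper itself remarks, the area inequality alone does not bound the number of discarded lattice points, so if you only prove the lemma your way, you would still need to reintroduce the reflection (or some substitute) to carry out the rest of the proof of the linear-time theorem.
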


%


\begin{proof} 
Consider one step of the algorithm: 
Let $ab$ be the edge associated to the step. When $a$ was added to the hull, it was as the farthest point in a given direction. Hence, there is no point behind the line orthogonal to this direction going through $a$. (Fig. \ref{qhull_split}b). The same can be said for $b$. 
Let $c$ be the intersection point of those two lines. 
Every point that lies within $\triangle abc$ will be fed to the following steps. 
At this step, we are looking for the point that is the farthest from the supporting line of $ab$ and outside the partial hull (let that point be $d$) (Fig. \ref{area_proof}).
Let $e$ and $f$ be the intersections between the line parallel to $ab$ going through $d$, and respectively $ac$ and $bc$. There are no points from $S$ inside the triangle $\triangle cef$. 
Adding $d$ to the partial hull creates two other edges to further be treated: one with $ad$ as an edge that will be fed the points inside $\triangle ade$ and one with $bd$ as the edge that will be fed the points inside $\triangle bdf$. 
The triangle $\triangle abd$ lies within the partial hull, therefore $\triangle abd$ is the region in which points are discarded. (Fig. \ref{area_proof})\\
\begin{figure}[tb]
\begin{center}
        \includegraphics[width=320px]{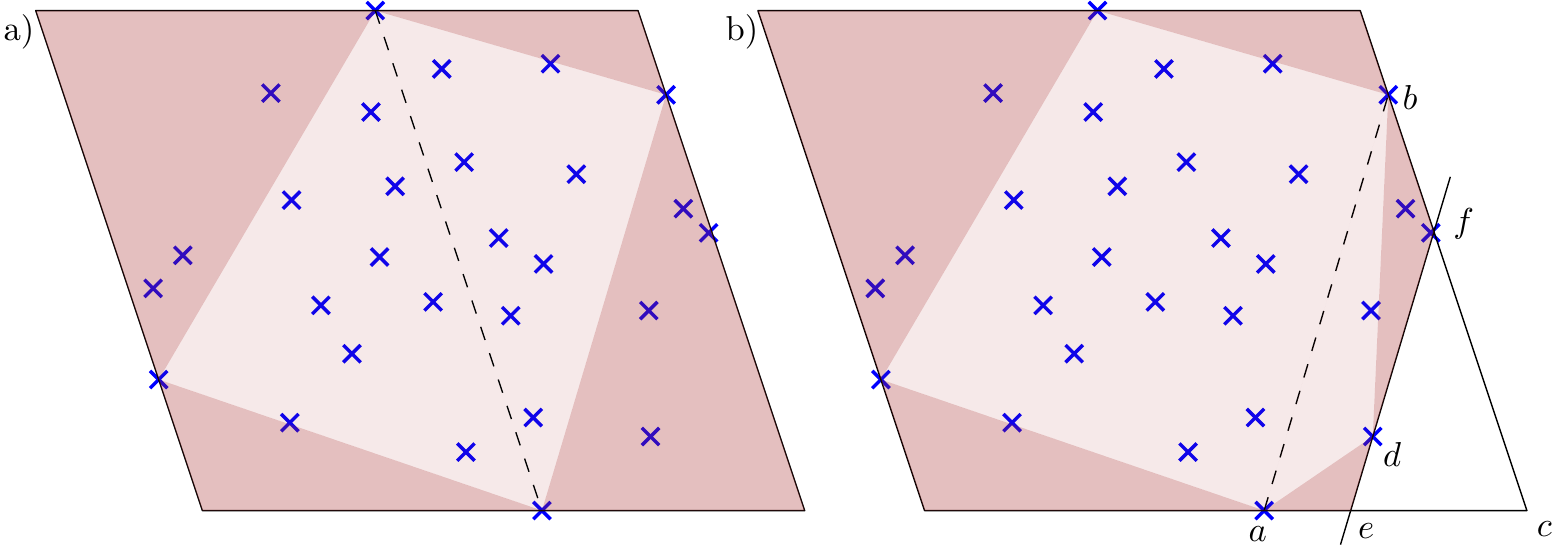}
    \caption{\textbf{Quickhull regions.} The preserved region (region in which we look for the next vertex to be added to the partial hull) is a triangle. This stays true when adding new vertices to the hull (as shown here in the bottom right corner). The partial hull (whose interior is shown in light brown) grows at each vertex insertions to the partial hull. The new region added to the partial hull is called discarded region.} 
    \label{qhull_split}
    \end{center}
\end{figure}
\begin{figure}[ht]
    \begin{center}
        \includegraphics[width=150px]{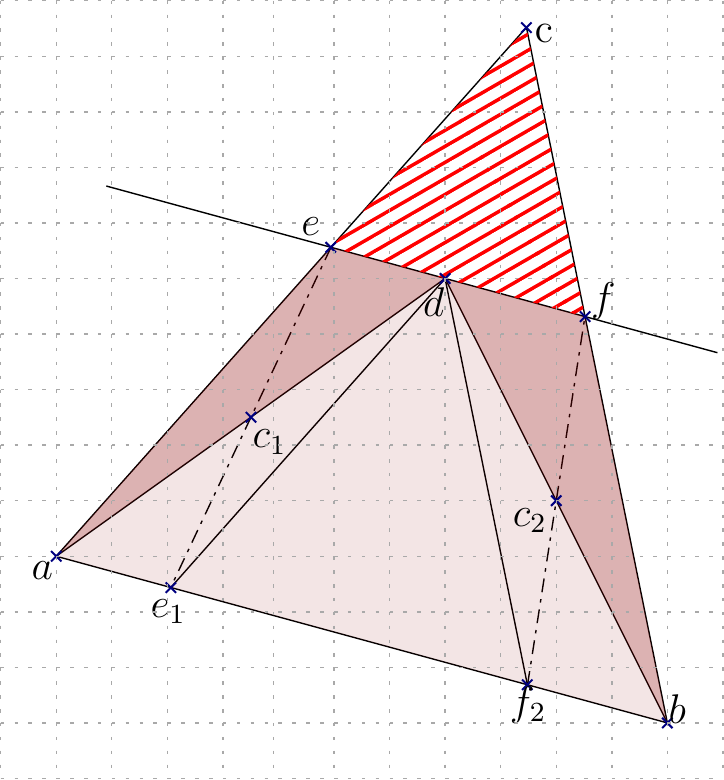}
    \end{center}
    \caption{\textbf{Symmetrical regions.}The next step of the algorithm will only be fed the points inside the dark brown regions (search regions).
    Each lattice points inside the light brown region (discarded region) is inside the partial hull and is therefore discarded.
    Each search region (in dark brown) has a symmetrical region (either through $c_1$ or $c_2$) that lies inside the discarded region. Furthermore, this symmetrical transformation also preserve lattice points. \label{area_proof}}
\end{figure}
We established that the preserved lattice points are the lattice points within $\triangle ade$ and $\triangle bdf$. Also the discarded lattice points are those within $\triangle abd$.
Let $\text{c}_{\text{1}}$ be the middle of $ad$ and $\text{c}_{\text{2}}$ be the middle of $bd$.
As shown in Fig. \ref{area_proof}, the symmetrical of $\triangle ade$ and $\triangle bdf$ through respectively $c_1$ and $c_2$ both lie inside $\triangle abd$ and do not intersect each other. Hence $\triangle abd$ is larger in terms of area than $\triangle aed \cup \triangle bdf$.
\end{proof}

\begin{remark}
Pick's formula does not apply here since all vertices of the triangle (namely $c$ in Fig. \ref{area_proof}) are not necessarily lattice points.
\end{remark}
\begin{remark}
As there is no direct relation between the area of a triangle and the number of lattice points inside it, this result is not sufficient to conclude that a constant proportion of points are discarded at each step.
\end{remark}

\begin{corollary}\label{cor}
The reflection of lattice points inside $\triangle aed$ and $\triangle bdf$ across respectively $\text{c}_{\text{1}}$ and $\text{c}_{\text{2}}$ are lattice points.
\end{corollary}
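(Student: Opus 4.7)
The plan is to observe that $a$, $b$, $d$ are all vertices of the partial hull, hence lattice points of $S \subset \mathbb{Z}^2$ (the partial hull is constructed by inserting only points from $S$). Thus $c_1 = (a+d)/2$ and $c_2 = (b+d)/2$ are half-integer points, and what matters is not whether $c_1, c_2$ themselves are in $\mathbb{Z}^2$ (generically they are not) but whether reflection across them preserves $\mathbb{Z}^2$.

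First I would write down the reflection explicitly. Reflection of a point $p$ across a center $q$ sends $p$ to $2q - p$. For $q = c_1$ and any $p \in \mathbb{Z}^2$, this gives
\[
p' \;=\; 2c_1 - p \;=\; (a+d) - p.
\]
Since $a$, $d$, and $p$ all lie in $\mathbb{Z}^2$, the coordinates of $p'$ are integer combinations of integers, so $p' \in \mathbb{Z}^2$. In particular, if $p$ lies inside $\triangle aed$, then its image under reflection across $c_1$ is a lattice point.

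The identical calculation with $c_2 = (b+d)/2$ shows that for any lattice point $p \in \triangle bdf$, its reflection across $c_2$ is $(b+d) - p \in \mathbb{Z}^2$.

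There is essentially no obstacle: the corollary is a one-line arithmetic consequence of the fact that the sum $a+d$ of two lattice points always has integer coordinates, even when the midpoint $(a+d)/2$ does not. The only point worth flagging, as the preceding remark already notes, is that this is exactly what upgrades the area inequality of the lemma into a statement about lattice point counts: combined with the fact that (as shown in the lemma) the reflected images of $\triangle aed$ and $\triangle bdf$ across $c_1$ and $c_2$ lie in $\triangle abd$ and are disjoint, this will let us conclude that the number of discarded lattice points is at least the number of preserved ones, which is the geometric heart of Theorem~\ref{t2}.
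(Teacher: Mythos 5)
Your proof is correct and follows the same route as the paper: both arguments rest on the fact that $a$, $b$, $d$ are lattice points, so the reflection $2c_1 - p = (a+d)-p$ (and likewise for $c_2$) of a lattice point is again a lattice point. Your version merely makes the reflection formula explicit where the paper phrases it in terms of $c_1, c_2$ having half-integer coordinates.
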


\begin{proof}
The points $a,b,d$ are lattice points so $\text{c}_{\text{1}}$ and $\text{c}_{\text{2}}$ (middle of respectively $ad$ and $bd$) have their coordinates in multiple of half integers. Hence the reflection of a lattice point across $\text{c}_{\text{1}}$ or $\text{c}_{\text{2}}$ is a lattice point.
Therefore, every lattice point within $\triangle aed$ has a lattice point reflection across $\text{c}_{\text{1}}$ within $\triangle a\text{e}_{\text{1}}d$ and every lattice point within $\triangle bfd$ has a lattice point reflection across $\text{c}_{\text{2}}$ within $\triangle b\text{f}_{\text{2}}d$. 

\end{proof}

\begin{remark}
This previous result would prove that half the points are discarded at each step if it were not for the lattice points on the diagonals $ad$ and $bd$.
\end{remark}



We will now show that quickhull discards at least half of the remaining points at each step, hence proving theorem \ref{t2}

\begin{proof} 
We established in Corollary \ref{cor} that lattice points inside the search regions ($\triangle aed$ and $\triangle bfd$) have symmetrical counterparts inside the discarded region (more precisely inside $\triangle ae_1d$ and $\triangle bf_2d$) (Fig. \ref{area_proof}).
By preserving each points inside $\triangle aed$ and $\triangle bfd$ at each step, we do not have a discarded symmetrical counterpart for the lattice points lying on $ad$ and $bd$. But we do not need to preserve those points, since $ad$ and $bd$ are at this step edges of the partial hull.
Removing lattice points from $ad$ and $bd$ implies that in the following step there will be no lattice points on $ab$, leaving lattice points on $ef$ without a discarded symmetrical counterparts (Fig. \ref{adaptation}).

\begin{figure}[tb] 
    
    \begin{center}
        \includegraphics[width=345px]{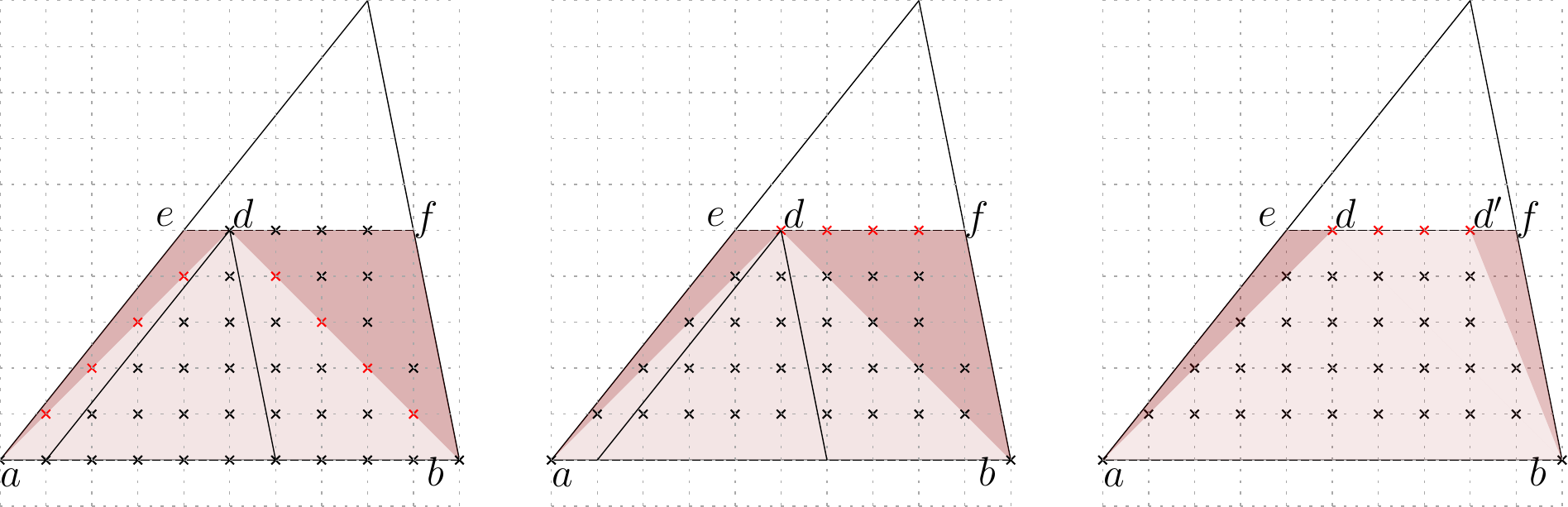}
    \end{center}
    \caption{\textbf{Lonely points.} The lattice points without discarded symmetrical counterparts are shown in red. On the left: if every points inside the triangle is preserved, and in the center: if the points on the edges of the partial hull are discarded. Finally on the right a visualization of what happens if we discard all the farthest points and update the partial hull accordingly.}
    \centering
    \label{adaptation}
\end{figure}

Let actually discard every points on $ef$, since they all are equally farthest from $ab$ in the outer direction, they all belong to the hull. Hence we can add the first and last lattice point on $ef$ to the partial hull (Fig. \ref{adaptation}). Note that this only takes linear time and does not change the time complexity of each individual step.
Hence, at each step of quickhull, for every preserved points there is at least a discarded point. 
Consequently, the number of operations is proportional to  $n\sum\limits_{i=0}^\infty (\frac{1}{2})^i = 2n$ and quickhull takes linear time for digital convex sets.
\end{proof}

\subsection{Determining the digital convexity of a set} \label{ss:22}
We showed in Theorem \ref{t2} that the quickhull algorithm computes the convex hull of digital convex sets in linear time thanks to the fact that at each step quickhull discards at least half of the remaining points. 
By running quickhull on any given set $S$, and stopping the computation if any step of the algorithm discards less than half of the remaining points, we ensure both that the running time is linear, and that if $S$ is digital convex, quickhull finishes and we get the convex hull of $S$. 
If the computation finishes for $S$, we still need to test its digital convexity. To do so, we use the previously computed convex hull and compute $|conv(S) \cap \ZZ^2|$ using Pick's formula~\cite{Pic1899}. The set $S$ is digital convex if $|conv(S) \cap \ZZ^2| = |S|$.
Hence the resulting Algorithm \ref{alg::isconv}.

\begin{algorithm}[h] 
\caption{isDigitalConvex($S$)}
\begin{algorithmic}[1] 
\REQUIRE $S$ a set of points
\ENSURE true if $S$ is digital convex, false if not.
\WHILE{$S$ is not empty}
    \STATE Run one step of the quickhull algorithm on $S$
    \IF{quickhull discarded less than half the remaning points of $S$}
        \RETURN false
    \ENDIF
\ENDWHILE
\STATE Compute $|conv(S) \cap \ZZ^2|$
\IF{$|conv(S) \cap \ZZ^2| > |S|$}
    \RETURN false
\ENDIF
\RETURN true
\end{algorithmic} 
\label{alg::isconv}
\end{algorithm}

\begin{theorem} \label{t::isconv}
Algorithm \ref{alg::isconv} tests digital convexity of any 2 dimensional set $S$, and runs in $O(n + h \log r)$ time, where $h$ is the number of edges of $conv(S)$ and $r$ is the diameter of $S$.
\end{theorem}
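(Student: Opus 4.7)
The plan is to split the argument into a correctness part and a running-time part, both of which lean heavily on Theorem~\ref{t2} and Pick's formula.

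For correctness, I would justify each of the three return statements of Algorithm~\ref{alg::isconv}. The first \texttt{return false} inside the \texttt{while} loop is the contrapositive of (the proof of) Theorem~\ref{t2}: that proof shows that whenever $S$ is digital convex, every step of the modified quickhull shrinks the surviving set by at least a factor of two; so as soon as a step fails to discard half of the remaining points, $S$ cannot be digital convex. The second \texttt{return false} uses the fact that $S \subseteq \conv(S) \cap \ZZ^2$ always, hence $|S| \leq |\conv(S) \cap \ZZ^2|$, with equality precisely when $S$ is digital convex; so strict inequality certifies non-convexity. The final \texttt{return true} is reached only after quickhull has terminated on $S$, meaning it has produced the true $\conv(S)$, and after Pick's formula has certified $|\conv(S) \cap \ZZ^2| = |S|$, which together give $S = \conv(S) \cap \ZZ^2$.

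For running time, I would treat the \texttt{while} loop and the post-processing separately. Each step of quickhull costs time proportional to the number $m$ of currently surviving points, since we scan each point once to assign it to its edge and to test farthest-point conditions; the guard in the loop ensures $m$ at least halves between successive steps, so the total cost telescopes as a geometric series bounded by $2n$. Initialization (finding the extreme points in the two chosen directions and bucketing points across the initial four edges) is also $O(n)$. After the loop, the hull has $h$ edges with lattice vertices, so the signed area can be computed in $O(h)$ time. The boundary lattice-point count that Pick's formula needs is
\[
b \;=\; \sum_{i=1}^{h} \gcd\!\bigl(|\Delta x_i|,\,|\Delta y_i|\bigr),
\]
where the coordinate differences along each edge are bounded by the diameter $r$. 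Each gcd is computed via the Euclidean algorithm in $O(\log r)$ time, giving $O(h \log r)$ for Pick's formula. Adding these contributions yields the claimed $O(n + h \log r)$ bound.

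The substantive geometric content is already packaged in Theorem~\ref{t2}, so the main obstacle here is bookkeeping: I need (a) to confirm that the ``at least half discarded'' invariant is strong enough to turn the per-step linear cost into a convergent geometric series with constant $2$, and (b) to justify that computing $|\conv(S) \cap \ZZ^2|$ from the hull description really costs $O(h \log r)$ rather than something larger, by invoking Pick's formula with the Euclidean-algorithm-based count of boundary lattice points per edge.
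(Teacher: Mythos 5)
Your proposal is correct and follows essentially the same route as the paper's proof: the halving guard in the loop bounds the quickhull phase by a geometric series summing to $2n$, Theorem~\ref{t2} guarantees the loop never aborts on a digital convex input, and Pick's formula with per-edge gcd computations accounts for the $O(h \log r)$ term. Your write-up is merely more explicit than the paper's (spelling out the correctness of each return branch and the boundary-point sum $\sum_i \gcd(|\Delta x_i|, |\Delta y_i|)$), but the decomposition and the key ingredients are identical.
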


\begin{proof}
As Algorithm \ref{alg::isconv} runs quickhull, but stops as soon as less than half the remaining points have been removed, the running time of the quickhull part is bounded by the series $n\sum\limits_{i=0}^\infty (\frac{1}{2})^i = 2n$, and is hence linear. Thanks to Theorem \ref{t2} we know that the computation of quickhull will not stop for any digital convex sets.
Computing $|conv(S) \cap \ZZ^2|$ using Pick's formula requires the computation of the area of $\conv(S)$ and of the number of lattice points lying on its boundary, which requires the computation of a greatest common divisor. Hence this takes $O(h \log r)$ time where $h$ is the number of edge of $conv(S)$ and $r$ is the diameter of $S$.
As $S$ is digital convex if and only if $|S| = |conv(S) \cap \ZZ^2|$, Algorithm \ref{alg::isconv} effectively tests the digital convexity of a 2 dimensional set in  $O(n + h \log r)$ time.
\end{proof}

\section{Test Digital Convexity in Dimension $d$}
\label{s:fixedd}

We provide two algorithms for verifying the digital convexity in any fixed dimension. 

\subsection{Naive algorithm}

The naive algorithm mentioned in the Introduction is based on the following equivalence: the set $S \subset \ZZ ^d$ is digital convex if and only if its cardinality is equal to the cardinality of $conv(S) \cap \ZZ^d$. In Step 1, we compute the convex hull of $S$ (in $O(n \log n + n ^{\lfloor\frac{d}{2}\rfloor})$  time~\cite{ChB93}). In Step 2, we need to count the number of integer points inside $\conv(S)$. The classical algorithm to achieve this goal is known as Barvinok algorithm~\cite{Bar94}.
This approach determines only the number of missing points. If we want to enumerate the points, it is possible to do so through a formal computation of the generating functions used in Barvinok algorithm.

\begin{theorem}
    The naive algorithm tests digital convexity in any fixed dimension $d$ and runs in polynomial time.
\end{theorem}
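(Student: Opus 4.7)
The plan is to prove correctness and polynomial running time separately, with correctness being essentially a tautology and the running time being an assembly of two cited polynomial-time procedures whose outputs are polynomially bounded in size.

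For correctness, I would simply invoke the definition. Since $S \subseteq \conv(S)$ and $S \subset \ZZ^d$, we always have $S \subseteq \conv(S) \cap \ZZ^d$. Hence the equality $|S| = |\conv(S) \cap \ZZ^d|$ is equivalent to $S = \conv(S) \cap \ZZ^d$, which is exactly the definition of digital convexity. Thus comparing the two cardinalities in Step~3 gives the correct answer.

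For the running time, I would analyze the two steps in turn. Step~1 is the computation of $\conv(S)$ in the form of its vertex and facet descriptions; in fixed dimension $d$, Chazelle's algorithm~\cite{ChB93} does this in $O(n \log n + n^{\lfloor d/2 \rfloor})$ time, which is polynomial in $n$. Crucially, the output has combinatorial complexity $O(n^{\lfloor d/2 \rfloor})$, and each facet is described by an inequality whose coefficients are polynomial in the bit size of the input coordinates. Step~2 feeds this $H$-representation to Barvinok's algorithm~\cite{Bar94}, which counts lattice points in a rational polytope in time polynomial in the encoding length of that polytope, provided $d$ is fixed. Composing the two polynomial bounds yields a polynomial overall bound; the final comparison in Step~3 takes $O(\log n)$ bit operations.

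The only delicate point, and the main obstacle to flag, is making sure that the polytope passed to Barvinok's algorithm has encoding length polynomial in the original input size. This reduces to two observations that I would state but not belabor: the number of facets of $\conv(S)$ is at most $O(n^{\lfloor d/2 \rfloor})$ by the upper bound theorem, and the facet-defining hyperplanes can be obtained with coefficients of bit length polynomial in the bit length of the input coordinates (they are determinants of $d\times d$ matrices with input entries). With these two bounds in hand, Barvinok's polynomial-time guarantee applies directly, and the whole pipeline runs in time polynomial in the input size for every fixed $d$.
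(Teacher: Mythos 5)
Your proposal is correct and follows essentially the same route as the paper's proof: compute $\conv(S)$ via Chazelle's algorithm, count lattice points via Barvinok's algorithm, and compare cardinalities, using the containment $S \subseteq \conv(S) \cap \ZZ^d$ to reduce set equality to a cardinality check. You are in fact somewhat more careful than the paper, which does not explicitly address the encoding length of the polytope handed to Barvinok's algorithm; your determinant-based bound on the facet coefficients fills that (minor) gap but does not change the argument.
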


\begin{proof}
    Computing the convex hull of any set can be done in $O(n \log n + n ^{\lfloor\frac{d}{2}\rfloor})$  time~\cite{ChB93}).
    Counting lattice points inside a convex lattice polytope can be done in polynomial time~\cite{Bar94-2}.
    A direct consequence of the digital convexity definition is that a set $S \subset \ZZ ^d$ is digital convex if and only if $|S| = |conv(S) \cap \ZZ^d|$, hence the naive algorithm tests digital convexity in any fixed dimension $d$ and runs in polynomial time.
    
\end{proof}





\subsection{Alternative algorithm}

This new algorithm computes all integer points in the convex hull of $S$ with a more direct approach. Its principle is to enumerate the points $x$ of a  finite lattice set $S' \subset \ZZ ^d$  surrounding $conv(S) \cap \ZZ ^d$ ($conv(S) \cap \ZZ ^d \subset S' $). In a first variant, we count the number of points of $S'$ belonging to $conv(S)$. At the end, the set $S$ is convex if and only if $|conv (S) \cap \ZZ ^d|$ is equal to the cardinality of $S$. In a second variant, for each point of $S'$, we test whether it belongs to $S$ and in the negative case, we test whether it belongs to the convex hull of $S$. If a point of $S' \setminus S \cap conv(S)$ is found, then $S$ is not convex. 

We define the set $S'$ as the set of points $x\in \ZZ ^d$ such that the cube $x+[-\frac{1}{2} , \frac{1}{2} ]^d $ has a nonempty intersection with the convex hull of $S$, where $+$ denotes the Minkowski sum. It can be easily proved that $S'$ is $2d$-connected (the $2d$ neighbors of a lattice point $x\in \ZZ ^d$ are the $2d$ integer points at Euclidean distance $1$) and by construction, it contains $S$. 
The graph structure induced by the $2d$-connectivity on $S'$ allows to visit all the points of $S'$ efficiently: for each point $x\in S'$, we consider its $2d$ neighbors and test whether they belong to  $S'$. If they do, we add them to the stack of the remaining points of $S'$. The goal is to test whether a point of $S' \setminus S$ is in the convex hull of $S$.
\begin{figure}[tb]
    \begin{center}
        \noindent\makebox[0.9\textwidth]{\includegraphics[width=325px]{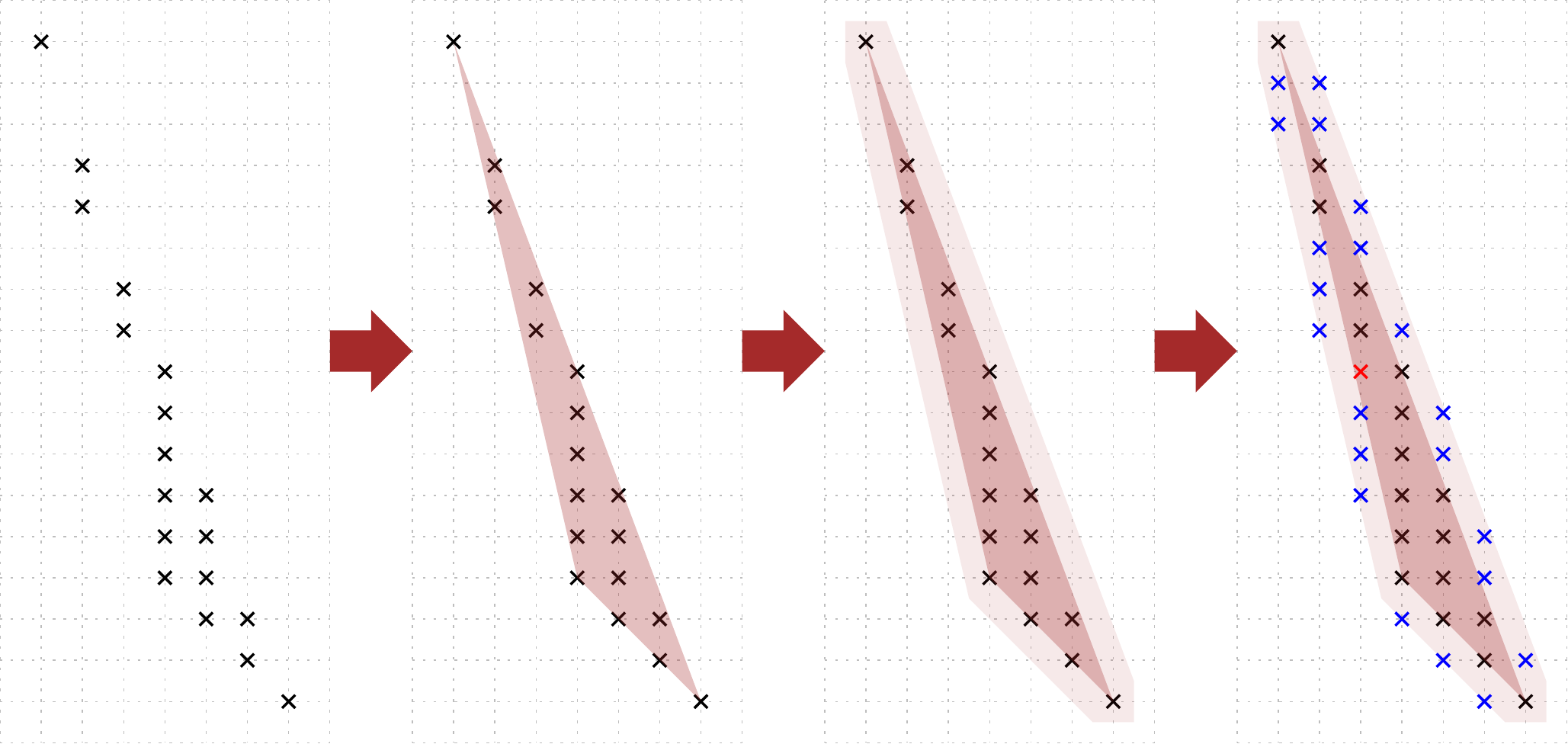}}
    \end{center}            
      \caption{\label{fig7}\textbf{Practical algorithm.} A lattice set $S$, its convex hull and its dilation by a centered cube of side $1$. The intersection of $\conv (S) + [-\frac{1}{2}, \frac{1}{2}] ^d$ with the lattice is the set $S'$. It is $2d$-connected and contains the convex hull of $S$. The principle of the algorithm is either to count the points of $S'$ in $\conv (S)$ (variant 1)  or to search for a point of $S' \setminus S$ (blue points) in the convex hull of $S$ (variant 2). }
    \centering
    \label{qh}
\end{figure}

Then the algorithm has two main routines:
\begin{itemize}
    \item $\mathtt{InConvexHull}_S$ tests whether a given point $x\in \RR  ^d$ belongs to the convex hull of $S$. It is equivalent with testing whether there exists a hyperplane separating $x$ from the points of $S$. It can be done by linear programming with a worst-case time complexity of $O(n)$ for fixed dimension $d$~\cite{BCK08}.
    \item $\mathtt{InConvexHull}_{S+  [-\frac{1}{2}, \frac{1}{2}] ^d }$ tests whether a given point $x$ belongs to the convex hull of $S+  [-\frac{1}{2}, \frac{1}{2}] ^d$. It follows the same principle as 
    $\mathtt{InConvexHull}_S$ with $2^d n$ points. The time complexity remains linear in fixed dimension. This routine is used to test whether an integer point belongs to $S'$.
 \end{itemize}
 
The algorithm is the following. First, we create a stack $T$ of the points of $S'$ to visit and initialize it with the set $S$. For each point $x$ in $T$, we remove it from the stack $T$ and label it as already visited. Then, we consider its $2d$ neighbors $x'$. If $x'$ belongs to $S'$ and has not been visited previously, we add it in the stack $T$. We test whether $x$ belongs to $conv(S)$ and increment the cardinality of $conv (S) \cap \ZZ ^d$ accordingly (variant 1) or test whether $x$ is in $S$ and $ conv(S)$ and return $\mathtt{S}$ \texttt{not convex} if $x \in  conv(S)\setminus S$ (variant 2).  

The running time is strongly dependent on the cardinality of $S'$. It is $ O(n |S'| )$.
If the size of $S'$ is of the same magnitude as the initial set, the algorithm runs in $O(n^2)$ time. 
It is unfortunately not possible to bound $|S' |$ as a function of $n$.
The ratio $\frac{|S' |}{|S|}$ can go to infinity. It is easy to build such an example with a set $S$ consisting of only two lattice points, for instance for any $k \in \mathbb{Z}$ the set $S=\{(0,0);(1,2k)\}$ induces $\frac{|S' |}{|S|} \geq k$. A direction of improvement could be to consider a linear transformation of the lattice $\ZZ ^d$ in order to obtain a more compact lattice set and then a lower ratio $\frac{|S' |}{|S|}$. LLL algorithm~\cite{LLL82} could be useful to achieve this goal in future work. 

As in the naive algorithm, a variant of this approach can be easily developed in order to enumerate the missing points.

\section{Perspectives}


In this paper, we presented an algorithm to test digital convexity in time linear in $n$ for dimension $d=2$. In higher dimensions, our running time depends on the complexity of general convex hull algorithms. The questions of whether digital convexity can be tested in linear time in $3$ dimensions, or faster than convex hull computation in arbitrary dimensions remain open. A tentative approach consists of changing the lattice base, in order to obtain certain connectivity properties.

We showed that the convex hull of a digital convex set in dimension 2 can be computed in linear time. Can the convex hull of digital convex sets be computed in linear time in dimension 3, or more generally, what is the complexity of convex hull computation of a digital convex set in any fixed dimension? We note that the number of faces of any digital convex set in $d$ dimensions is $O(V^{(d-1)/(d+1)})$, where $V$ is the volume of the polytope~\cite{And63,Bar08}. Therefore, the lower bound of $\Omega(n^{\floor{(d-1)/2}})$ for the complexity of the convex hull of arbitrary polytopes does not hold for digital convex sets.


\subsubsection{Acknowledgement}
This work has been sponsored by the French government research program ``Investissements d'Avenir'' through the IDEX-ISITE initiative 16-IDEX-0001 (CAP 20-25).

\bibliographystyle{unsrt}
\bibliography{sample}

\end{document}